\DeclareFontFamily{OT1}{pzc}{}
\DeclareFontShape{OT1}{pzc}{m}{it}{<-> s * [1.10] pzcmi7t}{}
\DeclareMathAlphabet{\mathpzc}{OT1}{pzc}{m}{it}
\newtheorem{theorem}{Theorem}[section]
\newtheorem{lemma}[theorem]{Lemma}
\newtheorem{proposition}[theorem]{Proposition}
\providecommand{\R}{\mathbb{R}}
\providecommand{\SO}{\mathbf{SO}}
\providecommand{\MR}{\mathbf{MR}}
\providecommand{\grpG}{\mathbf{G}}
\providecommand{\grpS}{\mathbf{S}}
\providecommand{\grpN}{\mathbf{N}}
\providecommand{\gothg}{\mathfrak{g}}
\providecommand{\gothh}{\mathfrak{h}}
\providecommand{\gothX}{\mathfrak{X}} 
\providecommand{\calD}{\mathcal{D}}
\providecommand{\calF}{\mathcal{F}}
\providecommand{\calM}{\mathcal{M}}
\providecommand{\calN}{\mathcal{N}}
\providecommand{\vecV}{\mathbb{V}}
\providecommand{\DM}{\mathcal{D}(\mathcal{M})}
\providecommand{\I}{\mathrm{I}}
\providecommand{\pt}{\mathrm{P}}
\DeclareMathOperator{\Ad}{Ad}
\providecommand{\id}{\mathrm{id}} 
\providecommand{\tT}{\mathrm{T}} 
\providecommand{\tD}{\mathrm{D}}
\providecommand{\tL}{\mathrm{L}}
\providecommand{\tR}{\mathrm{R}}
\providecommand{\mr}[1]{{#1}^\circ} 
\providecommand{\scirc}{%
    \hbox{\fontfamily{\rmdefault}\fontsize{0.4\dimexpr(\f@size pt)}{0}\selectfont{\raisebox{-0.52ex}[0ex][-0.52ex]{$\circ$}}}}
\mathchardef\mhyphen="2D
\providecommand{\etal}{\textit{et al.}~}
\begin{document}
\title{
Equivariant Filter Design for Discrete-time Systems
}
\headertitle{
Equivariant Filter Design for Discrete-time Systems
}
\author{
    \href{https://orcid.org/0000-0001-7969-7039}{\includegraphics[scale=0.06]{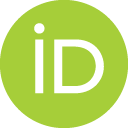}\hspace{1mm}
Yixiao Ge}
\\
    Systems Theory and Robotics Group \\
    Australian Centre for Robotic Vision \\
	Australian National University \\
    ACT, 2601, Australia \\
    \texttt{Yixiao.Ge@anu.edu.au} \\
\And    \href{https://orcid.org/0000-0003-4391-7014}{\includegraphics[scale=0.06]{orcid.png}\hspace{1mm}
Pieter van Goor}
\\
    Systems Theory and Robotics Group \\
    Australian Centre for Robotic Vision \\
	Australian National University \\
    ACT, 2601, Australia \\
    \texttt{Pieter.vanGoor@anu.edu.au} \\
	\And	\href{https://orcid.org/0000-0002-7803-2868}{\includegraphics[scale=0.06]{orcid.png}\hspace{1mm}
    Robert Mahony}
\\
    Systems Theory and Robotics Group \\
    Australian Centre for Robotic Vision \\
	Australian National University \\
    ACT, 2601, Australia \\
	\texttt{Robert.Mahony@anu.edu.au} \\
}

\maketitle

\begin{abstract}
The kinematics of many nonlinear control systems, especially in the robotics field, admit a transitive Lie-group symmetry, which is useful in high performance observer design.
The recently proposed equivariant filter (EqF) exploits equivariance to generate high performance filters for a wide range of real-world systems.
However, existing work on the equivariant filter, and equivariance of control systems in general, is based on a continuous-time formulation.
In this paper, we first present the equivariant structure of a discrete-time system.
We then use this to propose a discrete-time version of the equivariant filter.
A novelty of the approach is that the geometry of the symmetry group naturally appears as parallel transport in the reset step of the filter.
Preliminary results for linear second order kinematics with separate bearing and range measurements indicate that the discrete EqF significantly outperforms both a discretized version of the continuous EqF and a  classical discrete EKF.
\end{abstract}

\section{INTRODUCTION}
Many nonlinear control systems are naturally posed on homogeneous spaces, that is manifolds that submit to a transitive group action by a Lie group.
Such systems carry the property of equivariance, that is, the defining dynamics can be mapped throughout state-space by the group action.
The geometry and symmetry of those systems have been recognized since the seventies \cite{FB_ADL_04}, \cite{Jurdjevic_1997}, although this did not translate into observer and filter design until the 1990s.
Salcudean proposed a non-linear observer for the attitude estimation of a satellite \cite{Salcudean_1991}.
Thienel and Sanner extended the original observer and incorporated bias estimation \cite{Thienel_2003}.
Parallel work from Aghannan and Rouchon introduced a state observer design for Lagrangian systems using the invariance property \cite{Aghannan_2003}.
Later around 2005, with the emerging unmanned aerial vehicle industry, there was a burst of work to develop a simple, robust observer design for attitude control.
Mahony \etal \cite{Mahony_2008} proposed an asymptotically stable complementary filter for attitude estimation posed directly on the matrix Lie group $\SO(3)$, which has an almost-global convergence property.
In parallel, Bonnabel \etal proposed a general theory for the Invariant Extended Kalman Filter (IEKF) in a series of works \cite{Bonnabel_2007}, \cite{Bonnabel_2009}.
They provided a design methodology for systems on Lie groups with invariance properties, which was later extended to the class of `group affine' systems \cite{Barrau_2017}.
These works fully exploited the underlying symmetry properties of the system and had great impact in the control and robotics community.
Recently, van Goor \etal \cite{vanGoor_Hamel_Mahony_2021} and  Mahony \etal \cite{mahony2020equivariant} proposed the Equivariant Filter (EqF), a general filter design for systems on homogeneous space.

Less attention has been focused on the observer problem for discrete-time systems.
There are two common approaches to design a discrete filter for a continuous system with discrete-time measurements.
The first is to design a continuous-time filter and then discretize the filter, while the second approach is to discretize the continuous system and design a discrete-time filter for the discrete system \cite{Chung_1990}.
One of the advantages of the second is a range of existing geometric integrators to discretize systems on manifolds, provided by Crouch \etal and Hairer \etal \cite{crouch1993numerical}\cite{hairer2006geometric}.
Nonetheless, both approaches introduce approximation error into the system with the discretization process, although, the second is known to have better stability properties \cite{Hori_1989}.
Preliminary work on stochastic filtering of continuous- and discrete-time systems was carried out in the late 1970s and early 1980s, and is summarised in Maybeck's book \cite{maybeck1982stochastic}.
More recently, the question of discrete filtering for systems on Lie-groups has received considerable attention.
Bourmaud \etal generalized the extended Kalman filter to matrix Lie groups for discrete-time systems \cite{Bourmaud_2015}.
Barrau \etal proposed the IEKF for discrete-time dynamical systems on Lie groups, which has been implemented for navigation problems \cite{barrau2018invariant}.

In this paper, we propose an observer design methodology for a class of discrete-time systems on homogeneous spaces using the equivariance property.
This work is parallel to recent work on Equivariant filter (EqF) design \cite{vanGoor_Hamel_Mahony_2021}, that has yielded filters that are obtained as a discretization of a continuous-time filter design. 
We assume that a discretization of the continuous-time system is provided as the model, exploiting for example the literature on discretization of systems on Lie-groups \cite{crouch1993numerical}\cite{hairer2006geometric}\cite{forster2016manifold}\cite{barrau2019linear}. 
In particular, these discretizations yield updates in the form of a symmetry transformation that approximates the integration of the continuous-time system for constant input acting on the present state. 
The symmetry structure is exploited in the filter development and we believe lies at the heart of the performance gains reported.

One of our key contributions is that we propose a novel approach to model filter reset as the parallel transport of covariance.
In a conventional extended Kalman filter, the update step provides an estimate of the new mean and covariance based on the predicted state and new measurements.
We implement Bayesian fusion in local coordinates centred at the predicted state estimate generating both a new state estimate and covariance expressed in the local coordinates, and then, drawing from recent insight \cite{doi:10.1146/annurev-control-061520-010324}, parallel transport the covariance to new coordinates centred at the updated state estimate.
Our approach provides an alternative, and we claim more general, perspective on work done by Markley in \cite{markley2003attitude} twenty years ago and generalised recently by Mueller \etal \cite{mueller2017covariance}.
We apply the proposed filter to an example of second-order kinematics with separate range and bearing measurements and run the comparison with a discretized version of a continuous-time EqF and conventional discrete extended Kalman filter.
The simulations demonstrate the advantage of both EqF filters versus the EKF due to proper modelling of the reset step during the transient response, and significant performance gains of the discrete EqF compared to both comparison filters in the asymptotic response.
We believe this performance gain is due to the fact that the error incurred in a discrete state update implemented as a symmetry transform is better adapted to the discrete equivariant filter equations than the error incurred by the discretization of a continuous-time filter, even when the associated filter was based on equivariant design principles in continuous-time.

\section{PRELIMINARIES}
\label{sec:notation}
Let $\calM$ be a smooth manifold.
The tangent space at a point $\xi\in\calM$ is denoted $\mathrm{T}_\xi\calM$.
A diffeomorphism on $\calM$ is a smooth map $F:\calM\rightarrow\calM$.

Let $\gothX(\calM)$ denote the space of vector fields over $\calM$.
Let $\calD(\calM)$ denote the group (under function concatenation) of diffeomorphisms of $\calM$.
Let $f \in \gothX(\calM)$ then the flow $\calF: \R\times\calM\rightarrow\calM$ of $f$ is defined as the family of diffeomorphisms $\calF(t,\cdot) \in \calD(\calM)$, $\calF(t,\xi_0) = \xi(t)$, where $\xi(t)$ is the solution of $\dot{\xi} = f(\xi)$ for $\xi(0) = \xi_0$.

Given a differentiable function between smooth manifolds $h:\calM\rightarrow\calN$, its derivative at $\xi_\circ$ is written as
\begin{align*}
    \tD_{\xi|\xi_\circ}h(\xi): \tT_{\xi_\circ}\calM\rightarrow \tT_{\xi_\circ}\calN.
\end{align*}
The notation $\tD h(\xi):\tT\calM\rightarrow \tT\calN$ denotes the differential of $h$ with an implicit base point.

Let $\grpG$ be a general Lie group with Lie algebra $\gothg$, $\id$ denotes the identity element of $\grpG$.
For arbitrary $X, Y\in\grpG$, the left and right translations are denoted $\tL_X$ and $\tR_X$, and are defined by
\begin{equation*}
    \tL_X(Y):=XY, \quad \tR_X(Y):=YX.
\end{equation*}
The Lie algebra $\gothg$ is isomorphic to a vector space $\R^n$ with the same dimension.
We use the wedge $(\cdot)^\wedge:\R^n\rightarrow\gothg$ and vee $(\cdot)^\vee:\gothg\rightarrow\R^n$ operators to map between the Lie algebra and vector space.
The adjoint map for $\grpG$, $\Ad_X:\gothg\rightarrow\gothg$ is defined by
\begin{align*}
    \Ad_XU = \tD \tL_X \tD \tR_{X^{-1}}U
\end{align*}
for any $X\in\grpG$ and $U\in\gothg$.
When $\grpG$ is a matrix Lie group, we can define the adjoint matrix $\Ad^\vee_X:\R^n \rightarrow \R^n$ as
\begin{align*}
    \Ad^\vee_XU = (\Ad_XU)^\vee
\end{align*}
For $X\in\grpG$ the \emph{inner automorphism} $\I_X: \grpG\rightarrow\grpG$
of $\grpG$ is the smooth map
\begin{equation*}
    \I_X(Z):=XZX^{-1}=\tL_X\tR_{X^{-1}}Z, \quad Z\in\grpG.
\end{equation*}

A right group action of a Lie group $\grpG$ on a manifold $\calM$ is a smooth map $\phi:\grpG\times\calM\rightarrow\calM$ that satisfies
\begin{align*}
    \phi(X,\phi(Y,\xi))=\phi(YX,\xi) \quad \textrm{and} \quad \phi(\id,\xi)=\xi
\end{align*}
for any $X,Y \in\grpG$ and $\xi\in \calM$.
It induces the partial maps $\phi_X:\calM\rightarrow\calM$ and $\phi_\xi:\grpG\rightarrow\calM$ which are defined by $\phi_X(\xi):=\phi(X,\xi) =: \phi_\xi(X)$ respectively.

\begin{proposition}
    Any right action $\phi: \grpG\times\calM \rightarrow \calM$ induces a right action on the group of diffeomorphisms of $\calM$, denoted $\Phi: \grpG\times\calD(\calM)\rightarrow\calD(\calM)$, and defined by
    \begin{equation}\label{eq:Phi}
        \Phi(X,F) = \phi_X\circ F \circ \phi_X^{-1},
    \end{equation}
    for any $F\in\DM$.
\end{proposition}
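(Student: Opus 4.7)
The plan is to verify the three axioms required: that $\Phi(X,F)$ lands in $\calD(\calM)$ (well-definedness), that $\Phi(\id,\cdot)$ is the identity, and that the compositional law $\Phi(X,\Phi(Y,F))=\Phi(YX,F)$ holds. Smoothness of $\Phi$ in $X$ will be inherited from the smoothness of $\phi$ together with the fact that $\calD(\calM)$ is endowed with composition as its group operation.

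First I would show that for every $X\in\grpG$ the partial map $\phi_X:\calM\to\calM$ is itself a diffeomorphism, with inverse $\phi_{X^{-1}}$. This follows from the two right-action axioms applied with $Y=X^{-1}$: one obtains $\phi_X\circ\phi_{X^{-1}}=\phi_{X^{-1}X}=\phi_{\id}=\id_\calM$ and symmetrically $\phi_{X^{-1}}\circ\phi_X=\id_\calM$. Consequently $\Phi(X,F)=\phi_X\circ F\circ\phi_X^{-1}$ is a composition of three diffeomorphisms of $\calM$ and therefore lies in $\calD(\calM)$, so $\Phi$ is well-defined as a map into $\calD(\calM)$.

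The identity axiom is immediate: $\phi_{\id}=\id_\calM$, hence $\Phi(\id,F)=\id_\calM\circ F\circ\id_\calM=F$. For the compositional axiom I would compute directly, using associativity of composition together with the identity $\phi_X\circ\phi_Y=\phi_{YX}$ (which is just the right-action rule rewritten) and its corollary $\phi_Y^{-1}\circ\phi_X^{-1}=(\phi_X\circ\phi_Y)^{-1}=\phi_{YX}^{-1}$:
\begin{align*}
\Phi(X,\Phi(Y,F))
&= \phi_X\circ(\phi_Y\circ F\circ\phi_Y^{-1})\circ\phi_X^{-1}\\
&= (\phi_X\circ\phi_Y)\circ F\circ(\phi_Y^{-1}\circ\phi_X^{-1})\\
&= \phi_{YX}\circ F\circ\phi_{YX}^{-1}
= \Phi(YX,F).
\end{align*}

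I expect the only subtle point to be smoothness of $\Phi$ viewed as a map $\grpG\times\calD(\calM)\to\calD(\calM)$, since $\calD(\calM)$ is an infinite-dimensional (Fr\'echet) Lie group; however, the statement is standard once one adopts the usual compact-open $C^\infty$ topology, since composition and inversion on $\calD(\calM)$ are smooth in this setting and $X\mapsto\phi_X$ is smooth by hypothesis on $\phi$. The rest of the verification is a purely algebraic manipulation with no real obstacle.
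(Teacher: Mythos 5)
Your proof is correct and follows essentially the same route as the paper's: a direct verification of the identity and compatibility axioms using $\phi_X\circ\phi_Y=\phi_{YX}$ and associativity of composition. The extra care you take with well-definedness (that $\phi_X$ is a diffeomorphism with inverse $\phi_{X^{-1}}$) and the remark on smoothness of $\Phi$ on $\calD(\calM)$ go slightly beyond what the paper writes down, but the core argument is identical.
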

\begin{proof}
 Let $X, Y \in \grpG$ and $F\in\calD(\calM)$.
    To prove compatibility, compute
    \begin{align*}
        \Phi_Y(\Phi_X(F)) &= \Phi_Y(\phi_XF\phi_X^{-1})\\
        &=\phi_Y(\phi_XF\phi_X^{-1})\phi_Y^{-1}\\
        &=\phi_{XY}F\phi_{(XY)^{-1}} = \Phi_{XY}F.
    \end{align*}
    To prove the identity,
    \begin{align*}
        \Phi_\id F = \phi_\id F\phi_\id = F,
    \end{align*}
    as required.
\end{proof}

\section{PROBLEM FORMULATION}
\label{sec:formulation}
\subsection{Discrete-time System}
Let $\calM$ be a smooth manifold termed the \emph{state space}, $\vecV$ be a vector space termed the \emph{input space}.
Let $F : \calM \times \vecV \to \calM$ be a family of diffeomorphisms on $\calM$, and $F(\cdot, u_k) \in \calD(\calM)$ for all $u_k \in \vecV$.
The class of discrete-time systems on $\calM$ that we consider can be written as
\begin{equation}
    \label{eq:evo}
    \xi_{k+1} = F_{u_{k+1}}(\xi_k) = F(\xi_k,u_{k+1}),\quad \xi_0\in \calM,
\end{equation}
The configuration output $y_k = h(\xi_k)$ is given by a function $h: \calM\rightarrow\calN\subset\R^n$, where $\calN$ is a smooth manifold termed the \emph{output space} embedded in $\R^n$.
In practice, there will be disturbances on both the update and the measurements process. 
The noise process for the measurement is a Gaussian modelled in the embedding space $\R^n$, while for the process noise, we model a Gaussian disturbance in the local exponential coordinates during the update step (the $P$ matrix in \eqref{eq:sigma_predict}). 

\subsection{State Symmetry}
Let $\grpG$ be a Lie group with Lie algebra $\gothg$, and $\calM$ be a homogeneous space of $\grpG$, then there exists a smooth, transitive, right group action of $\grpG$ on $\calM$,
\begin{equation}
    \phi: \grpG\times\calM \rightarrow \calM.
\end{equation}

A \emph{discrete lift} for the system (\ref{eq:evo}) is a map $\Lambda: \calM\times\vecV\rightarrow\grpG$ with the lift condition
\begin{equation}
    \label{eq:lift_condition}
    \phi_{\xi_k}(\Lambda(\xi_k, u_{k+1})) = F_{u_{k+1}}(\xi_k),
\end{equation}
for every $\xi\in\calM$ and $u\in\vecV$.
The lift $\Lambda: \calM\times\vecV \rightarrow \grpG$ is used to define a \emph{lifted system} on the symmetry Lie group $\grpG$.
First choose an arbitrary origin $\xi^\circ\in\calM$, that acts as the origin in a global coordinate parametrization $\phi_{\xi^\circ} : \grpG \rightarrow \calM$.
The lifted system is written
\begin{equation}
    \label{eq:lift}
    X_{k+1} = X_k\Lambda(\phi_{\xi^\circ}(X_k),u_{k+1}), \quad X(0)\in\grpG.
\end{equation}

\begin{lemma}
Consider the lifted system \eqref{eq:lift}.
If the initial condition $X(0)\in\grpG$ satisfies
\[
\phi_{\xi^\circ}(X(0))=\xi(0),
\]
then the trajectory of the lifted system $X_k$ projects down to the original kinematic system with the map $\phi_{\xi^\circ}: \grpG\rightarrow\calM$
\begin{equation}
    \phi_{\xi^\circ}(X) \equiv \xi,
\end{equation}
for all time.
\end{lemma}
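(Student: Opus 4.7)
The plan is to proceed by induction on the discrete time index $k$. The base case $k=0$ is exactly the hypothesis $\phi_{\xi^\circ}(X_0) = \xi_0$, so nothing needs to be shown.

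For the inductive step, I assume $\phi_{\xi^\circ}(X_k) = \xi_k$ and want to derive $\phi_{\xi^\circ}(X_{k+1}) = \xi_{k+1}$. The key ingredients are (i) the update rule \eqref{eq:lift} for the lifted system, (ii) the right action property $\phi(YX,\xi)=\phi(X,\phi(Y,\xi))$ recalled in the preliminaries, and (iii) the lift condition \eqref{eq:lift_condition}. Starting from the definition,
\begin{equation*}
    \phi_{\xi^\circ}(X_{k+1}) = \phi\bigl(X_k\Lambda(\phi_{\xi^\circ}(X_k),u_{k+1}),\xi^\circ\bigr),
\end{equation*}
I would use the right action identity to split off $X_k$, rewriting the right-hand side as $\phi\bigl(\Lambda(\phi_{\xi^\circ}(X_k),u_{k+1}),\phi(X_k,\xi^\circ)\bigr)$. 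The inductive hypothesis then replaces $\phi(X_k,\xi^\circ)$ by $\xi_k$ in both slots, leaving $\phi_{\xi_k}\bigl(\Lambda(\xi_k,u_{k+1})\bigr)$. Finally, the lift condition collapses this to $F_{u_{k+1}}(\xi_k)=\xi_{k+1}$, closing the induction.

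The only subtle point is the ordering convention: because $\phi$ is a \emph{right} action, the composition rule reads $\phi(YX,\xi)=\phi(X,\phi(Y,\xi))$, and this is exactly what makes the factor $X_k$ peel off on the outside rather than the inside. If one accidentally used the left-action convention, the argument of $\Lambda$ would not simplify via the inductive hypothesis. So the main thing to be careful about is not any deep obstacle but the bookkeeping of the action direction, which is already fixed by the definitions given in the preliminaries. Once that is respected, the proof is a short three-line chain of equalities wrapped in an induction.
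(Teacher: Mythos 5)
Your proposal is correct and follows essentially the same route as the paper: peel off $X_k$ via the right-action composition rule $\phi(YX,\xi)=\phi(X,\phi(Y,\xi))$, substitute $\phi(X_k,\xi^\circ)=\xi_k$, and close with the lift condition \eqref{eq:lift_condition}. The only cosmetic difference is that you package the argument as an explicit induction, whereas the paper defines $\xi_k:=\phi_{\xi^\circ}(X_k)$ and verifies that this sequence satisfies the recursion \eqref{eq:evo}; the computational core is identical.
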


\begin{proof}
Let $X_k$  be a solution of (\ref{eq:lift}) and let $\xi_k = \phi_{\xi^\circ}(X_k)$.
One has
\begin{align*}
    \xi_{k+1} = \phi_{\xi^\circ}(X_{k+1}) &= \phi_{\xi^\circ}(X_k\Lambda(\phi_{X_k}(\xi^\circ),u_{k+1}))\\
    &= \phi(\Lambda(\xi_k,u_{k+1}),\phi(X_k,\xi^\circ))\\
    &= \phi_{\xi_k}(\Lambda(\xi_k, u_{k+1}))\\
    &= F(\xi_k,u_{k+1}),
\end{align*}
which satisfies (\ref{eq:evo}).
\end{proof}

\section{EQUIVARIANT SYSTEMS}
\label{sec:symmetry}
\subsection{Equivariance}
A discrete-time system (\ref{eq:evo}) is termed \textit{equivariant} if there exists a group action
\begin{align}
    \psi:\grpG\times\vecV\rightarrow\vecV
\end{align}
which satisfies
\begin{equation}
    \label{eq:equivariance}
    \phi_X (F(\xi,u))=F(\phi_X(\xi), \psi_X(u)),
\end{equation}
for any $X\in\grpG$, $\xi\in\calM$ and $u\in\vecV$.
Consider a discrete-time equivariant system with the group action, then
\begin{align}
        F_{\psi_X(u)}(\xi) &= F_{\psi_X(u)}(\phi_X(\phi_X^{-1}(\xi)))\nonumber\\
        &= \phi_X(F_u(\phi_X^{-1}(\xi))\nonumber\\
        &= \Phi_XF_u(\xi), \label{eq:input}
\end{align}
where $\Phi$ is the group action on $\calD(\calM)$ \eqref{eq:Phi}.
We can also conclude from  (\ref{eq:input}) that if $\psi$ exists, i.e. the system \eqref{eq:evo} is equivariant, then $\psi$ is implicitly determined by the induced group action $\Phi$.

\subsection{Equivariant Lift}
For a system with state symmetry $\phi: \grpG\times\calM \rightarrow \calM$ and input symmetry $\psi: \grpG\times\vecV\rightarrow\vecV$, the discrete lift is \emph{equivariant} if
\begin{equation}
    \label{eq:eqlift}
    \Lambda(\phi(X,\xi), \psi(X, u)) = X^{-1}\Lambda(\xi, u)X = \I_{X^{-1}}\Lambda(\xi, u) ,
\end{equation}
for any $X\in\grpG$, $\xi\in\calM$ and $u\in\vecV$.

\section{OBSERVER ARCHITECTURE}
\label{sec:observer}
In this section we provide the formulation of an equivariant observer for a discrete-time system.
The key in the formulation is the equivariant error which measures the global error between the manifold and the symmetry group.
\subsection{Observer Formulation}
Let $\calM$ be a smooth manifold.
Consider a kinematic model with state symmetry $\phi: \grpG\times\calM \rightarrow \calM$, input symmetry $\psi: \grpG\times\vecV\rightarrow\vecV$ and trajectory denoted by $\xi_k$.
Choose an arbitrary but fixed origin $\xi^\circ\in\calM$.
The lifted system is defined by the evolution function
\begin{equation}
    \label{eq:liftedsystem}
    {X}_{k+1} = {X}_k\Lambda(\phi_{\xi^\circ}({X}_k),u_{k+1}), \quad \phi(X(0),\xi^\circ)=\xi(0)
\end{equation}
where $X(0)\in\grpG$.

Define the observer $\hat{X}_k$ dynamics on the symmetry Lie group by
\begin{equation}
    \label{eq:observer}
    \hat{X}_{k+1} = \exp(\Delta_{k+1})\hat{X}_k\Lambda(\phi_{\xi^\circ}(\hat{X}_k),u_{k+1}), \quad \hat{X}(0)=\id
\end{equation}
where $\Delta_{k+1}\in\gothg$ is the time varying correction term that remains to be chosen.
The corresponding state estimate of the observer is given by
\begin{equation*}
    \hat{\xi}_k=\phi_{\xi^\circ}(\hat{X}_k).
\end{equation*}
The \textit{equivariant error} is defined as
\begin{equation}
    \label{eq:equierror}
    e_{k}=\phi_{\hat{X}_k^{-1}}(\xi_{k}).
\end{equation}
This error is a key concept in equivariant filter design, since it measures the difference between the observer trajectory $\hat{X}_k\in\grpG$ and the system trajectory $\xi_k\in\calM$ which live in two different spaces.
Note that if $\hat{\xi_k}=\xi_k$, i.e the observer estimate is correct, then the error $e_k=\xi^\circ$.

Define the \textit{origin input}
\begin{align}
    \label{eq:orivel}
    u^\circ_{k+1}:=\psi(\hat{X}_k^{-1},u_{k+1}),
\end{align}
which is measurable because both $\hat{X}$ and $u$ are known.

\subsection{Error Dynamics}
The EqF is derived by linearising the dynamics of the equivariant error $e_k$.
\begin{lemma}
    Consider an equivariant system with lifted model (\ref{eq:lift}) and the observer (\ref{eq:observer}), the error dynamics are given by
    \begin{equation}
        \label{eq:error_dynamics}
        e_{k+1}=\phi(\Lambda(e_k,u^\circ_{k+1})\Lambda(\xi^\circ,u^\circ_{k+1})^{-1}\exp(-\Delta_{k+1}), e_k),
    \end{equation}
    and only depend on the origin input ${u}^\circ_{k+1} = \psi(\hat{X}_k^{-1},u_{k+1})$ and the error $e_k$ itself.
\end{lemma}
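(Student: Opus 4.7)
The plan is to compute $e_{k+1}=\phi(\hat{X}_{k+1}^{-1},\xi_{k+1})$ by direct substitution of the dynamics for both $\xi$ and $\hat{X}$, and then repeatedly apply two structural identities: (i) the right action property $\phi(X,\phi(Y,\xi))=\phi(YX,\xi)$ to regroup all group elements onto a single slot of $\phi$, and (ii) the equivariance of the lift \eqref{eq:eqlift}, which will let the $\hat{X}_k$ conjugations cancel and leave the expression in the desired form.

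More concretely, I would first substitute $\xi_{k+1}=\phi_{\xi_k}(\Lambda(\xi_k,u_{k+1}))$ and the inverse of the observer update
\[
\hat{X}_{k+1}^{-1}=\Lambda(\hat{\xi}_k,u_{k+1})^{-1}\hat{X}_k^{-1}\exp(-\Delta_{k+1}),
\]
where $\hat{\xi}_k=\phi_{\xi^\circ}(\hat{X}_k)$, into the definition of $e_{k+1}$. Using the right action identity gives
\[
e_{k+1}=\phi\bigl(\Lambda(\xi_k,u_{k+1})\,\Lambda(\hat{\xi}_k,u_{k+1})^{-1}\hat{X}_k^{-1}\exp(-\Delta_{k+1}),\;\xi_k\bigr).
\]
Next, I would invoke $\xi_k=\phi(\hat{X}_k,e_k)$ (inverting the definition of the equivariant error) and apply the right action property once more to move $\hat{X}_k$ inside, so that the base point becomes $e_k$ and an extra factor of $\hat{X}_k$ appears on the left of the group argument.

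The crux of the argument is then to rewrite the two lift evaluations using equivariance. Since $\hat{\xi}_k=\phi(\hat{X}_k,\xi^\circ)$ and $\xi_k=\phi(\hat{X}_k,e_k)$, and since the true input can be written as $u_{k+1}=\psi(\hat{X}_k,u^\circ_{k+1})$ (which follows from the right action property of $\psi$ applied to \eqref{eq:orivel}), the equivariant lift condition \eqref{eq:eqlift} yields
\[
\Lambda(\xi_k,u_{k+1})=\hat{X}_k^{-1}\Lambda(e_k,u^\circ_{k+1})\hat{X}_k,\qquad
\Lambda(\hat{\xi}_k,u_{k+1})=\hat{X}_k^{-1}\Lambda(\xi^\circ,u^\circ_{k+1})\hat{X}_k.
\]
Substituting both expressions into the regrouped formula, the conjugating factors of $\hat{X}_k$ telescope, and one is left precisely with
\[
e_{k+1}=\phi\bigl(\Lambda(e_k,u^\circ_{k+1})\,\Lambda(\xi^\circ,u^\circ_{k+1})^{-1}\exp(-\Delta_{k+1}),\;e_k\bigr),
\]
which is the claimed identity.

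The only step that requires real care is the bookkeeping of \emph{right} action conventions: one must consistently apply $\phi(X,\phi(Y,\xi))=\phi(YX,\xi)$ and the analogous identity for $\psi$, and remember that equivariance of the lift produces $\I_{X^{-1}}$ rather than $\I_X$. The argument does not use any linearisation, only the algebraic identities above, so apart from this clerical effort the result drops out. Finally, the claim that the dynamics depend only on $e_k$ and $u^\circ_{k+1}$ is immediate from the final formula, since $\Delta_{k+1}$ is a design parameter chosen by the observer and $\xi^\circ$ is a fixed choice of origin.
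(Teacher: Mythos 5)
Your argument is correct and uses the same essential machinery as the paper: the right-action regrouping of all group factors, the identity $u_{k+1}=\psi(\hat{X}_k,u^\circ_{k+1})$, and the equivariance of the lift \eqref{eq:eqlift} applied with $X=\hat{X}_k$ so that the conjugating factors telescope. The only (cosmetic) difference is that the paper first introduces the group error $E_k := X_k\hat{X}_k^{-1}$ on $\grpG$, derives its dynamics, and then projects via $e_k=\phi(E_k,\xi^\circ)$, whereas you work directly with $e_{k+1}=\phi(\hat{X}_{k+1}^{-1},\xi_{k+1})$ on $\calM$ and thereby bypass the lifted trajectory $X_k$ altogether.
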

\begin{proof}
    Let $X_k$ be a solution to the lifted model (\ref{eq:lift}) satisfying $\phi(X(0),\xi^\circ)=\xi(0)$.
    Define the group error $E_k:=X_k\hat{X}_k^{-1}$, compute the dynamics of $E$:
{\small
    \begin{align*}
 E_{k+1} & = X_k\Lambda(\phi_{\xi^\circ}(X_k),u_{k+1})\Lambda(\phi_{\xi^\circ}(\hat{X}_k),u_{k+1})^{-1}\hat{X}_k^{-1}\exp(-\Delta_{k+1}),\\
    &=E_k\hat{X}_k\Lambda(\phi_{\xi^\circ}(E_k\hat{X}_k),u_{k+1})\hat{X}_k^{-1}\hat{X}_k\Lambda(\phi_{\xi^\circ}(\hat{X}_k),u_{k+1})^{-1}\\
    &\qquad\qquad\qquad\qquad\qquad\qquad\qquad\hat{X}_k^{-1}\exp(-\Delta_{k+1}),\\
    &=E_k\Lambda(\phi_{\xi^\circ}(E_k),\psi(\hat{X}_k^{-1},u_{k+1}))\Lambda(\phi_{\xi^\circ}(I),\psi(\hat{X}_k^{-1},u_{k+1}))^{-1}\\
    &\qquad\qquad\qquad\qquad\qquad\qquad\qquad\exp(-\Delta_{k+1}),\\
    &=E_k\Lambda(\phi(E_k,\xi^\circ),u^\circ_{k+1})\Lambda(\xi^\circ,u^\circ_{k+1})^{-1}\exp(-\Delta_{k+1}),\\
    &=E_k\Lambda(e_k,u^\circ_{k+1})\Lambda(\xi^\circ,u^\circ_{k+1})^{-1}\exp(-\Delta_{k+1}).
    \end{align*}
}
    Note that
    \begin{align*}
        e_k: = \phi(\hat{X_k}^{-1}, \xi_k) = \phi(X_k\hat{X_k}^{-1}, \xi^\circ) = \phi(E_k, \xi^\circ).
    \end{align*}
    Substitute the dynamics of $E_k$ into this relationship and one obtains
    \begin{align*}
        e_{k+1}&=\phi(E_{k+1},\xi^\circ),\\
    &=\phi(E_k\Lambda(e_k,u^\circ_{k+1})\Lambda(\xi^\circ,u^\circ_{k+1})^{-1}\exp(-\Delta_{k+1}),\xi^\circ),\\
    &=\phi(\Lambda(e_k,u^\circ_{k+1})\Lambda(\xi^\circ,u^\circ_{k+1})^{-1}\exp(-\Delta_{k+1}),\phi(E_k,\xi^\circ)),\\
    &=\phi_{e_k}(\Lambda(e_k,u^\circ_{k+1})\Lambda(\xi^\circ,u^\circ_{k+1})^{-1}\exp(-\Delta_{k+1})),
    \end{align*}
    as required.
\end{proof}

\subsection{Linearisation}
To linearise the error $e_k\in\calM$ requires a chart of local coordinates for the state.
Fix a local coordinate chart $\varTheta:\calM\rightarrow\R^m$ around the origin $\mr{\xi}$ and assume that $e$ remains in a neighborhood of the origin for all time.
We use the notation $\varTheta$ and $\varepsilon$ to represent the chart and the local coordinates of the error respectively,
\begin{align}
    \varepsilon_k=\varTheta(e_k).
\end{align}

\begin{proposition}
    Let $\varTheta$ be a local coordinate chart on $\calM$ in an open neighborhood around $\mr{\xi}$.
    The linearised dynamics of $e_k$ about $\varepsilon_k=0$ and $\Delta_k=0$ are
    \small
    \begin{align}
        \varepsilon_{k+1}&\approx A_{k+1}\varepsilon_k + O(\lvert \varepsilon \rvert^2),\label{eq:linearisation}\\
        A_{k+1} &= \tD\varTheta\cdot \tD\phi_{\xi^\circ}(\id)\cdot \tD \tR_{\Lambda(\mr{\xi},u^\circ_{k+1})^{-1}}(\Lambda(\mr{\xi},\mr{u}_{k+1}))\nonumber\\
        &\qquad \quad \cdot \tD_{\xi^\circ}\Lambda(\xi^\circ, \mr{u}_{k+1})\cdot \tD\varTheta^{-1}.\label{eq:statematrix}
    \end{align}
    \normalsize
\end{proposition}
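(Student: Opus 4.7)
The plan is to Taylor-expand the error dynamics from the previous lemma,
$$
e_{k+1} = \phi\bigl(\Lambda(e_k,u^\circ_{k+1})\Lambda(\xi^\circ,u^\circ_{k+1})^{-1}\exp(-\Delta_{k+1}),\; e_k\bigr),
$$
about the reference point $(\varepsilon_k,\Delta_{k+1}) = (0,0)$ and read off the coefficient of $\varepsilon_k$. First I would confirm that the zeroth-order term vanishes: at $e_k=\xi^\circ$ with $\Delta_{k+1}=0$, the group argument collapses to $\Lambda(\xi^\circ,u^\circ_{k+1})\Lambda(\xi^\circ,u^\circ_{k+1})^{-1}=\id$, and the identity axiom $\phi(\id,\xi^\circ)=\xi^\circ$ then yields $\varepsilon_{k+1}=0$, so the expansion genuinely begins at first order.

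Next I would apply the chain rule to $\varepsilon_k\mapsto\varepsilon_{k+1}=\varTheta(\phi(g(e_k),e_k))$ at $\Delta_{k+1}=0$, where $g(e_k):=\Lambda(e_k,u^\circ_{k+1})\Lambda(\xi^\circ,u^\circ_{k+1})^{-1}$ is a smooth $\grpG$-valued map on $\calM$ with $g(\xi^\circ)=\id$. At $\varepsilon_k=0$, the through-the-group contribution factors as a composition whose base points are all pinned down by the equilibrium: $\tD\varTheta^{-1}$ carries $\R^m$ to $\tT_{\xi^\circ}\calM$; $\tD_{\xi^\circ}\Lambda(\xi^\circ,u^\circ_{k+1})$ differentiates the first slot of $\Lambda$ (the second slot is held at $u^\circ_{k+1}$) and lands in $\tT_{\Lambda(\xi^\circ,u^\circ_{k+1})}\grpG$; right-multiplication by the constant $\Lambda(\xi^\circ,u^\circ_{k+1})^{-1}$ then produces the push-forward $\tD\tR_{\Lambda(\xi^\circ,u^\circ_{k+1})^{-1}}(\Lambda(\xi^\circ,u^\circ_{k+1}))$ carrying us to $\gothg=\tT_{\id}\grpG$; $\tD\phi_{\xi^\circ}(\id)$ then realises how infinitesimal group variations about the identity perturb the origin $\xi^\circ$, placing us back in $\tT_{\xi^\circ}\calM$; and finally $\tD\varTheta$ reads off the $\R^m$-coordinate perturbation. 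Concatenating these differentials in this order recovers the displayed expression for $A_{k+1}$.

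The hard part is bookkeeping the double occurrence of $e_k$ in $\phi(g(e_k),e_k)$: naively, a product-rule term from the second slot, equal to $\id_{\tT_{\xi^\circ}\calM}$ (again by $\phi(\id,\cdot)=\id_\calM$), arises alongside the through-the-group term and would formally add an identity contribution to the stated formula. I would reconcile this by passing to the group-level error $E_k := X_k\hat X_k^{-1}$, whose dynamics $E_{k+1}=E_k\,\Lambda(e_k,u^\circ_{k+1})\Lambda(\xi^\circ,u^\circ_{k+1})^{-1}\exp(-\Delta_{k+1})$ cleanly separates the identity contribution (the left factor $E_k$) from the dynamic contribution (the remaining product). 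Linearising $E_k=\exp(\eta_k)$ about $\eta_k=0$ and pushing forward by $\tD\phi_{\xi^\circ}(\id)$ would then let me identify the stated $A_{k+1}$ as the non-trivial, through-the-dynamics piece of the Jacobian, with the implicit identity contribution understood to be combined with it when assembling the full linearised propagation step of the filter.
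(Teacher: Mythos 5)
Your proposal follows the same route as the paper's own proof: Taylor-expand the error dynamics of the preceding lemma about $(\varepsilon_k,\Delta_{k+1})=(0,0)$, note that the drift term $\tilde{\Lambda}_{\xi^\circ}(\xi^\circ,u^\circ_{k+1})=\id$ pins the zeroth order, and chain the differentials $\tD\varTheta^{-1}$, $\tD_{\xi^\circ}\Lambda$, $\tD\tR_{\Lambda(\mr{\xi},u^\circ_{k+1})^{-1}}$, $\tD\phi_{\xi^\circ}(\id)$, $\tD\varTheta$ to obtain the stated $A_{k+1}$. The identity contribution from the second slot that you worry about is real and is present in the paper's proof too --- there the leading term $\varTheta(\phi_{\varTheta^{-1}(\varepsilon_k)}(\id))=\varepsilon_k$ is computed explicitly and then silently absorbed when the sum is written as $A_{k+1}\varepsilon_k$ --- so your reading that the full linearised propagation is $(I+A_{k+1}^{\mathrm{drift}})\varepsilon_k$, with the displayed formula giving only the through-the-group piece, is the correct resolution rather than something you need to explain away.
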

\begin{proof}
    Define the error drift term
    \begin{align*}
        \tilde{\Lambda}_{\xi^\circ}(e_k,u^\circ_{k+1})=\Lambda(e_k,u^\circ_{k+1})\Lambda(\xi^\circ,u^\circ_{k+1})^{-1}.
    \end{align*}
    The global state error $e_k$ has the dynamics
    \begin{align}
        e_{k+1}&=\phi_{e_k}(\tilde{\Lambda}_{\xi^\circ}(e_k,u^\circ_{k+1})\exp(-\Delta_{k+1})).
    \end{align}
    Then,
    \begin{equation}
        \label{eq:errorevo}
        \varepsilon_{k+1}=\varTheta(\phi_{\varepsilon^{-1}(\varepsilon_k)}(\tilde{\Lambda}_{\xi^\circ}(\varepsilon^{-1}(\varepsilon_k),u^\circ_{k+1})\exp(-\Delta_{k+1}))).
    \end{equation}
    Linearising the above equation (\ref{eq:errorevo}) about $\varepsilon_k=0$ and $\Delta_k=0$ yields
    \small
    \begin{align*}
        \varTheta(&\phi_{\varTheta^{-1}(\varepsilon_k)}(\tilde{\Lambda}_{\xi^\circ}(\varTheta^{-1}(\varepsilon_k),u^\circ_{k+1})))\\
        &\approx \varTheta(\phi_{\varTheta^{-1}(\varepsilon_k)}(\tilde{\Lambda}_{\xi^\circ}(\varTheta^{-1}(0),u^\circ_{k+1})))\\&+\tD\varTheta\cdot \tD\phi_{\xi^\circ}(\id)\cdot \tD\tilde{\Lambda}(\xi^\circ, u^\circ_{k+1})\cdot \tD\varTheta^{-1}(\varepsilon_k) + O(\lvert \varepsilon \rvert^2)\\
        &=A_{k+1}\varepsilon_k + O(\lvert \varepsilon \rvert^2),\\
        A_{k+1} &= \tD\varTheta\cdot \tD\phi_{\xi^\circ}(\id)\cdot \tD \tR_{\Lambda(\mr{\xi},u^\circ_{k+1})^{-1}}(\Lambda(\mr{\xi},\mr{u}_{k+1}))\\
        &\qquad \quad \qquad\qquad \cdot \tD_{\xi^\circ}\Lambda(\xi^\circ, \mr{u}_{k+1})\cdot \tD\varTheta^{-1},
    \end{align*}\normalsize
    as required.
\end{proof}

Now consider the global state error $e_k$ and the true system state $\xi_k$, the system output $h(\xi)\in\calN\subset\R^n$ can be written as
\begin{align}
    h(\xi_k)=h(\phi(\hat{X}_k,e_k))=h(\phi_{\hat{X}_k}(\varTheta^{-1}(\varepsilon_k))).
\end{align}
Define the output residual $\tilde{y}_k\in\R^n$ to be $\tilde{y}_k=h(\xi_k)-h(\hat{\xi}_k)$.
Linearise $\tilde{y}_k$ at $\varepsilon_k=0$,\small
\begin{align}
    \tilde{y}_k&=y_k-\hat{y}_k,\nonumber\\
    &=h(\phi_{\hat{X}_k}(\varTheta^{-1}(0)))+\tD h(\hat{\xi}_k)\cdot \tD\phi_{\hat{X}_k}(\xi^\circ)\nonumber\\
    &\qquad\qquad\qquad\qquad\cdot \tD\varTheta^{-1}(\varepsilon_k)+O(\lvert \varepsilon_k\rvert^2)-\hat{y}_k,\nonumber\\
    &=\tD h(\hat{\xi}_k)\cdot \tD\phi_{\hat{X}_k}(\xi^\circ)\cdot \tD\varTheta^{-1}(\varepsilon_k)+O(\lvert \varepsilon_k\rvert^2),\nonumber\\
    &= C_k\varepsilon_k + O(\lvert \varepsilon_k\rvert^2),\nonumber\\
    C_k&=\tD h(\hat{\xi}_k)\cdot \tD\phi_{\hat{X}_k}(\xi^\circ)\cdot \tD\varTheta^{-1}.\label{eq:linearoutput}
\end{align}\normalsize

\section{OBSERVER DESIGN}\label{subsec:eqf}
In this section, we use the concept of concentrated Gaussian distributions and pose the distributions with local coordinates.
Then we derive the EqF equations for predict, update and reset steps separately, which is based on the extended Kalman Filter applied to the equivariant error.

\subsection{Concentrated Gaussians on homogeneous spaces}
We extend the concept of \emph{concentrated Gaussian distribution on Lie groups} \cite{wolfe2011bayesian},\cite{Bourmaud_2015} to homogeneous spaces and introduce the \emph{extended concentrated Gaussian distribution}, which is a generalization of the normal distribution on the Euclidean space.

Let $\xi\in\calM$ be a random variable on the manifold.
Fix an arbitrary origin on the manifold $\xi^\circ\in\calM$.
Fix $\gothh$ to be a horizontal subspace of $\gothg$.
Define a local map $\phi_{\xi^\circ}\cdot\exp_\gothh :\gothh \rightarrow \calM$.
Since $\gothh$ is horizontal, this map is always locally well defined and injective.
Now define a local coordinate chart $\varTheta_\gothh:\calM\rightarrow\gothh$ to be
\begin{align}
    \varTheta_\gothh:=
    (\phi_{\xi^\circ}\cdot\exp_\gothh)^{-1},
\end{align}
on a neighborhood of $\xi^\circ\in\calM$ such that $\phi_{\xi^\circ} \cdot \exp$ is bijective.
If the homogeneous space is reductive, then there is a natural choice of $\gothh$ \cite{cheeger1975comparison} which is the normal coordinate, however, the construction is always possible.

Given $X\in\grpG$ is an element on the symmetry group, define the following probability function:
\begin{align}
    p(\xi) = \alpha e^{-\frac{1}{2}(\varTheta_\gothh(\phi(X^{-1},\xi))^\top \Sigma^{-1}\varTheta_\gothh(\phi(X^{-1},\xi)))},
\end{align}
where $\alpha$ is a normalizing factor and $\Sigma$ is a positive-definite matrix.
Define $\epsilon$ as
\begin{align}
    \epsilon = \varTheta_\gothh(\phi(X^{-1},\xi))
\label{eq:xi_epsilon}
\end{align}
where the assumption is made that $\phi(X^{-1},\xi)$ remains in the domain of definition of the local coordinate chart.
The concentrated Gaussian distribution is obtained by assigning $\epsilon \sim \grpN(0,\Sigma)$ to a Gaussian distribution.
We denote distribution induced on $\xi$ via \eqref{eq:xi_epsilon} to be the concentrated Gaussian distribution and denote this by $\xi \sim \grpN_\phi(X,0,\Sigma)$.
By construction, the expected value of $\xi$ is
$\mathrm{E}[\xi] = \phi(X,\xi^\circ)$.
Thus, for $\xi\sim\grpN_\phi(X,0,\Sigma)$ one has
$\xi = \phi_X(\varTheta_\gothh^{-1}(\epsilon))$, $\epsilon \sim \grpN(0,\Sigma)$ where $X$ is called the reference and $\epsilon$ is the local state error with mean 0 and covariance $\Sigma$.

We use the term \emph{extended concentrated Gaussian distribution} to refer to the case where the local state error distribution has non-zero mean.
For $\mu\in\R^m$ define $\xi \sim \grpN_\phi(X,\mu,\Sigma)$ to be
\begin{align}
    \xi = \phi_X(\varTheta_\gothh^{-1}(\epsilon)). \quad\epsilon \sim \grpN(\mu,\Sigma)
\end{align}
One has
\begin{align}
    \mathrm{E}[\xi]=\phi_X(\varTheta_\gothh(\mu)).
\end{align}
This extended notation allows us to explicitly model the update and reset step involving Bayesian fusion of new data and parallel transport.

\subsection{Prediction step}
Consider a discrete-time system (\ref{eq:evo}), assume there exists a state symmetry $\phi:\grpG\times\calM\rightarrow\calM$, an input symmetry $\psi:\grpG\times\vecV\rightarrow\vecV$ and an equivariant lift $\Lambda:\calM\times\vecV\rightarrow\grpG$.
Let $\xi_k\in\calM$ denote the true system state whose trajectory is determined by the external input $u_k\in\vecV$.
It can be expressed as a concentrated Gaussian distribution $\xi_k\sim \grpN_{\phi}(\hat{X}_k,0,\Sigma_k)$ on the homogeneous space.
\begin{lemma}
    Given a prior distribution $\xi_{k|k}\sim \grpN_{\phi}(\hat{X}_{k|k},0,\Sigma_{k|k})$, the prediction of $\hat{X}_{k+1|k}$ and $\Sigma_{k+1|k}$ can be best approximated by
    \begin{align}
        \hat{X}_{k+1|k} &= \hat{X}_{k|k}\Lambda(\phi_{\xi^\circ}(\hat{X}_{k|k}),u_{k+1}),\quad \hat{X}_0 = \id \\
        \Sigma_{k+1|k}&=A_{k+1}\Sigma_{k|k}A_{k+1}^\top+ P,\quad\Sigma(0)=\Sigma_0 \label{eq:sigma_predict}
    \end{align}
    where $P\in\mathbb{S}_+(m)$ is a constant state gain matrix.
\end{lemma}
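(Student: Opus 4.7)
The plan is to split the lemma into two parts: (i) the mean propagation $\hat{X}_{k+1|k}$, and (ii) the covariance propagation $\Sigma_{k+1|k}$.

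For the mean, I would start from the definition of the concentrated Gaussian $\xi_{k|k} = \phi_{\hat{X}_{k|k}}(\varTheta_\gothh^{-1}(\epsilon_k))$ with $\epsilon_k \sim \grpN(0,\Sigma_{k|k})$, and note that the maximum-likelihood (and mean, to first order) estimate of the state is $\hat{\xi}_{k|k} = \phi_{\xi^\circ}(\hat{X}_{k|k})$, obtained by setting $\epsilon_k = 0$. Applying the lifted dynamics \eqref{eq:lift} with $\Delta_{k+1}=0$ (no innovation is available at the predict step) gives precisely the observer form \eqref{eq:observer}, hence $\hat{X}_{k+1|k} = \hat{X}_{k|k}\Lambda(\phi_{\xi^\circ}(\hat{X}_{k|k}),u_{k+1})$. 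This is an exact propagation of the mean of the lifted group variable under the deterministic part of the dynamics.

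For the covariance, I would pull back the analysis into the local coordinates $\varepsilon_k = \varTheta_\gothh(e_k)$ in which the error dynamics linearise as in \eqref{eq:linearisation}. Since the predict step uses $\Delta_{k+1}=0$, one has $\varepsilon_{k+1} \approx A_{k+1}\varepsilon_k + O(|\varepsilon_k|^2)$ with $A_{k+1}$ given by \eqref{eq:statematrix} (evaluated at $u^\circ_{k+1} = \psi(\hat{X}_{k|k}^{-1}, u_{k+1})$). Adding the modelled process disturbance $w_{k+1}\sim\grpN(0,P)$ in the same local exponential coordinates, as described after \eqref{eq:evo}, gives $\varepsilon_{k+1} \approx A_{k+1}\varepsilon_k + w_{k+1}$. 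Taking covariance, using independence of $\varepsilon_k$ and $w_{k+1}$, yields the standard Kalman form $\Sigma_{k+1|k} = A_{k+1}\Sigma_{k|k}A_{k+1}^\top + P$. Finally, since the post-predict error is still (to first order) a zero-mean Gaussian in the local chart centred at the propagated reference $\hat{X}_{k+1|k}$, the pushed-forward distribution is again a concentrated Gaussian $\grpN_\phi(\hat{X}_{k+1|k},0,\Sigma_{k+1|k})$.

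The main obstacle is not the algebra but making precise what ``best approximated'' means. The statement hides two approximations: the truncation of the $O(|\varepsilon|^2)$ term in the linearised error dynamics, and the assumption that the nonlinear pushforward of a concentrated Gaussian under \eqref{eq:lift} is itself well approximated by a concentrated Gaussian with the reference propagated deterministically. I would therefore be explicit that the result is a first-order (EKF-style) approximation, and verify that the chart $\varTheta_\gothh$ used in \eqref{eq:statematrix} is compatible with the one used to define the concentrated Gaussian, so that the same Jacobian $A_{k+1}$ correctly transports the covariance. Once these modelling choices are stated cleanly, both propagation formulas follow directly from the linearisation established in the previous section.
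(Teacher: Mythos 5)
Your proposal is correct and lands on the same core computation as the paper: both arguments hinge on the linearised error propagation $\varepsilon_{k+1|k} \approx A_{k+1}\varepsilon_{k|k}$ from \eqref{eq:linearisation}. The framing differs slightly. The paper works with log-likelihoods: it writes $\mathcal{L}(\xi_{k|k}) = \frac{1}{2}\lVert \varepsilon_{k|k}\rVert^2_{\Sigma_{k|k}}$, argues that the deterministic prediction adds no information so $\mathcal{L}(\xi_{k+1|k}) = \mathcal{L}(\xi_{k|k})$, substitutes $\varepsilon_{k+1|k} = A_{k+1}\varepsilon_{k|k}$ into the Mahalanobis norm, and solves the resulting quadratic-form identity for $\Sigma_{k+1|k} = A_{k+1}\Sigma_{k|k}A_{k+1}^\top$. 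You instead compute the covariance of the pushed-forward error directly; at first order these are the same calculation. Two things in your write-up go beyond what the paper's proof actually contains, and both are to your credit: you explicitly derive the mean propagation $\hat{X}_{k+1|k} = \hat{X}_{k|k}\Lambda(\phi_{\xi^\circ}(\hat{X}_{k|k}),u_{k+1})$ from the lifted dynamics with $\Delta_{k+1}=0$ (the paper's proof never addresses the first equation of the lemma), and you account for the additive process-noise term $P$ by introducing $w_{k+1}\sim\grpN(0,P)$ in the local coordinates and invoking independence (the paper's proof concludes with $\Sigma_{k+1|k}=A_{k+1}\Sigma_{k|k}A_{k+1}^\top$ and simply says ``as required,'' leaving the $+P$ in \eqref{eq:sigma_predict} unjustified except by the remark after \eqref{eq:evo} that process noise is modelled in the local exponential coordinates). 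Your explicit flagging of the two approximations hidden in ``best approximated'' --- truncation of the $O(\lvert\varepsilon\rvert^2)$ terms and the assumption that the pushforward of a concentrated Gaussian remains concentrated Gaussian --- is also a genuine improvement in precision over the paper's presentation.
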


\begin{proof}
The log-likelihood function of $\xi_k$ can be written as
\begin{align}
    \mathcal{L}(\xi_{k|k})=\frac{1}{2}\lVert \varTheta_\gothh(\phi(\hat{X}^{-1},\xi_{k|k}))\rVert^2_{\Sigma_{k|k}}=\frac{1}{2}\lVert \varepsilon_{k|k}\rVert^2_{\Sigma_{k|k}},
\end{align}
where $\lVert\cdot\rVert^2_\Sigma$ is the Mahalanobis norm and $\varTheta_\gothh$ is a local coordinate chart on $\calM$.
Similarly, we can derive the log-likelihood of $\xi_{k+1}$,
\begin{align}
    \mathcal{L}(\xi_{k+1|k})=\frac{1}{2}\lVert \varepsilon_{k+1|k}\rVert^2_{\Sigma_{k+1|k}}.
\end{align}
Using the linearisation of $\varepsilon_k$ (\ref{eq:linearisation}), one gets
\begin{align}
    \mathcal{L}(\xi_{k+1|k})=\frac{1}{2}\lVert A_{k+1}\varepsilon_{k|k}\rVert^2_{\Sigma_{k+1|k}}.
\end{align}
Since there is no new information added to the filter in this stage, the probability distribution should remain the same,
\begin{align}
    \mathcal{L}(\xi_{k+1|k})&= \mathcal{L}(F(\xi_{k|k})) = \mathcal{L}(\xi_{k|k});\\
    \frac{1}{2}\lVert A_{k+1}\varepsilon_{k|k}\rVert^2_{\Sigma_{k+1|k}} &= \frac{1}{2}\lVert \varepsilon_{k|k}\rVert^2_{\Sigma_{k|k}}.
\end{align}
Solving the Mahalanobis norm yields
\begin{align}
    \Sigma_{k+1|k}=A_{k+1}\Sigma_{k|k} A_{k+1}^\top,
\end{align}
as required.
\end{proof}

\subsection{Update step}
The update step involves Bayesian fusion of the predicted state estimate $\xi_{k+1|k}\sim \grpN_{\phi}(\hat{X}_{k+1|k},0,\Sigma_{k+1|k})$ with a measurement $y_{k+1} \in \R^n$ associated with a generative noise model $y_{k+1}\sim \grpN(h(\xi_{k+1}),Q)$ for $Q\in\mathbb{S}_+(n)$ the measurement covariance.
We will solve this fusion problem in local coordinates centred at the predicted state estimate $\hat{\xi}_{k+1|k} = \phi_{\hat{X}_{k+1|k}}(\mr{\xi})$.
That is, consider the predicted state distribution in local coordinates \begin{align}
    &\Pr (\varepsilon_{k+1|k})\propto \exp(-\frac{1}{2}\varepsilon_{k+1|k}^\top \Sigma_{k+1|k}^{-1}\varepsilon_{k+1|k}),\label{eq:propstate}
\end{align}
and the output likelihood $l(\varepsilon_{k+1|k}|y_{k+1}) \equiv \Pr(y_{k+1}|\xi_{k+1|k})$ is given by
\begin{align}
 \mathcal{L}(\epsilon_{k+1|k}|y_{k+1}) & \propto \exp(-\frac{1}{2}(y_{k+1}-\hat{y}_{k+1})^\top Q^{-1}(y_{k+1}-\hat{y}_{k+1})) \notag \\
& \approx \exp(-\frac{1}{2}\varepsilon_{k+1|k}^\top C_{k+1|k}^\top  Q^{-1}C_{k+1|k} \varepsilon_{k+1|k}). \label{eq:propmeas}
\end{align}
Here Equation \eqref{eq:linearoutput} is used to substitute for $\tilde{y}_{k+1} = y_{k+1} - \hat{y}_{k+1}$ and the second order error terms are ignored.
The notation $C_{k+1|k}$ indicates that the linearisation is taken at the point $\hat{\xi}_{k+1|k}$.

The probability distribution for $\Pr(\varepsilon_{k+1|k+1} | \varepsilon_{k+1|k}, y_{k+1})$, for the fused estimate $\varepsilon_{k+1|k+1}$ in local coordinates, is the Gaussian $\varepsilon_{k+1|k+1} \sim {\mathbf N}(\mu_{k+1}, \Sigma^\diamond_{k+1|k+1})$ where the mean $\mu_{k+1}$ and covariance $\Sigma^\diamond_{k+1|k+1}$ are given by the standard update formula for Gaussian fusion
\begin{align}
    \Sigma^\diamond_{k+1|k+1}&=(\Sigma_{k+1|k}^{-1}+C_{k+1|k}^\top Q^{-1} C_{k+1|k})^{-1},\\
    \mu_{k+1} &= \Sigma^\diamond_{k+1|k+1}C_{k+1|k}^\top Q^{-1}\tilde{y}_{k+1}.
\end{align}
Since this information state lives in local coordinates centred at the predicted state $\hat{\xi}_{k+1|k}$ then the updated information state can be written as an extended concentrated Gaussian
\[
\xi_{k+1|k+1} \sim \grpN_{\phi}(\hat{X}_{k+1|k},\mu_{k+1},\Sigma^\diamond_{k+1|k+1}),
\]
using the notation introduced earlier.
In particular, the update step computes the new non-zero mean $\mu_{k+1}$ without changing the reference point of the coordinates $\xi_{k+1|k}$.
This leads naturally to posing the reset step as a coordinate transformation rather than a part of the stochastic fusion process.

\subsection{Reset Step}
In the previous subsection, the update step computed the new state estimate $\hat{\xi}_{k+1|k+1}$ in local coordinates with a non-zero mean $\mu_{k+1}$.
However, the local coordinates are still centred at $\hat{\xi}_{k+1|k}$ and the next update requires a state-estimate expressed in coordinates centred at $\hat{\xi}_{k+1|k+1}$.
The reset step computes $\hat{X}_{k+1|k+1}$ and $\Sigma_{k+1|k+1}$ such that
\[
\grpN_{\phi}(\hat{X}_{k+1|k},\mu_{k+1},\Sigma^\diamond_{k+1|k+1})
\approx
\grpN_{\phi}(\hat{X}_{k+1|k+1},0,\Sigma_{k+1|k+1})
\]
where $\hat{\xi}_{k+1|k} = \phi_{\hat{X}_{k+1|k}}(\mr{\xi})$ and
$\hat{\xi}_{k+1|k+1} = \phi_{\hat{X}_{k+1|k+1}}(\mr{\xi})$.
Finding $\hat{X}_{k+1|k+1}$ is relatively straightforward.
Computing the covariance $\Sigma_{k+1|k+1}$ is more challenging and is an active topic of research \cite{mueller2017covariance}.

To compute $\hat{X}_{k+1|k+1}$ we use the mean $\mu_{k+1}$ of the local state error to derive a correction term on the Lie algebra.
Let $\tD\phi_{\mr{\xi}}(\id)^\dagger$ denote a choice of right inverse of $\tD\phi_{\mr{\xi}}(\id)$; that is,
$\tD\phi_{\mr{\xi}}(\id) \tD\phi_{\mr{\xi}}(\id)^\dagger : T_{\mr{\xi}} \calM \to T_{\mr{\xi}} \calM$ is the identity map.
For $\mu_{k+1} \in\tT_{\xi^\circ}\calM$ define
\begin{align}
    \Delta_{k+1} &= \tD\phi_{\mr{\xi}}(\id)^\dagger \tD\varTheta_\gothh^{-1}\mu_{k+1}.
\end{align}
The new group element is defined to be
\begin{align}
    \hat{X}_{k+1|k+1} :=\exp(\Delta_{k+1})\hat{X}_{k+1|k}.
\end{align}
By construction $\phi_{\mr{\xi}} (\hat{X}_{k+1|k+1}) = \hat{\xi}_{k+1|k+1}$.

Left multiplication by the group element $\exp(\Delta_{k+1})$ can be interpreted as a left translation $\hat{X}_{k+1|k} \mapsto \hat{X}_{k+1|k+1} =
\tL_{\exp(\Delta_{k+1})}\hat{X}_{k+1|k}$ along the one-parameter subgroup $\exp(t \Delta_{k+1})$ for $t \in [0,1]$.
The Cartan-Schouten connections are a collection of three invariant connections for which the one-parameter subgroups are geodesics \cite{Nomizu_1954}.
The covariance $\Sigma^\diamond_{k+1|k+1}\in \mathbb{S}_+(m)$ can be thought of as a (2,0)-tensor and the inverse covariance, or information matrix, $(\Sigma^\diamond_{k+1|k+1})^{-1}$ a (0,2)-tensor.
Interpreting the curve $\exp(t \Delta_{k+1})$ as a geodesic we propose to parallel transport the covariance tensor $\Sigma^\diamond_{k+1|k+1}$ from the base point $\hat{\xi}_{k+1|k}$ to the new base point $\hat{\xi}_{k+1|k+1}$ along the curve $\exp(t \Delta_{k+1})$.
This approach provides a natural geometric manner to reset the covariance although it requires the additional structure of an affine connection that we discuss at the end of the section.
We will use the Cartan-Schouten (0)-connection, the unique torsion-free invariant connection.

Let $v\in\gothg$ be a tangent vector at identity and $\Delta \in \gothg$.
The parallel transport $\pt_{\exp(t \Delta)}(v)$ for the Cartan-Schouten (0)-connection is given by
\begin{align}
\pt_{\gamma_\Delta(t)}(v) = \tD \tL_{\exp(t\Delta)} \Ad_{\exp(-\frac{t}{2}\Delta)}(v).
\end{align}
Let $v(t)\in\gothg$ represent the left trivialisation of  $V(t)\in T_{\gamma_\Delta(t)}\grpG$; that is, $v(t):=\tD \tL_{\exp(- t\Delta)} V(t)$.
One has
\begin{align}
    v(t)=\Ad_{\exp(-\frac{t}{2}\Delta)}(v) = \exp(-\frac{t}{2}\Delta)\: v\: \exp(\frac{t}{2}\Delta).
\end{align}
Let $\Sigma$ denote the parallel transport of $\Sigma^\diamond$ over the curve $\exp(t \Delta)$.
One has
\begin{gather}
(\Sigma^\diamond)^{-1}(p_1,p_2)
= \Sigma^{-1}(\pt_{\exp(\Delta)}(p_1),\pt_{\exp(\Delta)}(p_2)); \nonumber\\
p_1^\top (\Sigma^\diamond)^{-1} p_2
= (\Ad_{\exp(-\frac{1}{2}\Delta)}(p_1))^\top \Sigma^{-1} \Ad_{\exp(-\frac{1}{2}\Delta)}(p_2); \nonumber\\
p_1^\top (\Sigma^\diamond)^{-1} p_2
= p_1^\top {\Ad_{\exp(-\frac{1}{2}\Delta)}^{\vee}}^\top \Sigma^{-1} \Ad_{\exp(-\frac{1}{2}\Delta)}^\vee(p_2),
\end{gather}
and hence
\[
\Sigma_{k+1|k+1}
= \Ad_{\exp(-\frac{1}{2}\Delta)}^{\vee} (\Sigma_{k+1|k+1}^\diamond) {\Ad_{\exp(-\frac{1}{2}\Delta)}^{\vee}}^\top.
\]
This formula corresponds to those obtained recently in the literature \cite{mueller2017covariance}.

A connection, or parallel transport is a geometric structure that is in addition to the differential geometric structure inherent in the manifold structure of the Lie-group.
However, the Cartan-Schouten connections are the only invariant connections for which one-parameter Lie-subgroups (exponential curves) are geodesics.
The (0)-connection or normal-connection is the only torsion free Cartan-Schouten connection, although it does have curvature that is expressed in the formula for parallel transport.
We believe that the torsion free property is of key importance and that the curvature is a natural consequence of the non-linear structure of the state-space.

\subsection{Summary}
Following the methedologies presented in this section, the Equivariant filter design for discrete-time system is
\begin{align}
    \intertext{Prediction:}
    \hat{X}_{k+1|k} &= \hat{X}_{k|k}\Lambda(\phi_{\xi^\circ}(\hat{X}_{k|k}),u_{k+1}),\quad \hat{X}(0) = \id\\
    \Sigma_{k+1|k}&=A_{k+1}\Sigma_{k|k}A_{k+1}^\top+ P,\quad\Sigma(0)=\Sigma_0\\
    \intertext{Update:}
    \Sigma^\diamond_{k+1|k+1}&=(\Sigma_{k+1|k}^{-1}+C_{k+1|k}^\top Q^{-1} C_{k+1|k})^{-1},\\
    \mu_{k+1} &= \Sigma_{k+1|k+1}C_{k+1|k}^\top Q^{-1}\tilde{y}_{k+1}
    \intertext{Reset:}
    \Delta_{k+1}&=\tD\phi_{\mr{\xi}}(\id)^\dagger \tD\varTheta_\gothh^{-1}\mu_{k+1},\\
    \hat{X}_{k+1|k+1}&=\exp(\Delta_{k+1})\hat{X}_{k+1|k},\label{eq:reset-Xhat}\\
    \Sigma_{k+1|k+1}&=\Ad_{\exp(-\frac{1}{2}\Delta_{k+1})}^{\vee}    \Sigma^\diamond_{k+1|k+1} {\Ad_{\exp(-\frac{1}{2}\Delta_{k+1})}^{\vee}}^\top \label{eq:reset_covariance}.
\end{align}
Note that we group the base point reset \eqref{eq:reset-Xhat} within the reset step, rather than in the update step as would be normal in a statement of the Extended Kalman Filter.
This allows us to explicitly identify the updated observer state on the group separately from the local coordinates. 
Noting that in a classical EKF the local coordinates are one-to-one with the manifold and this distinction is unnecessary.
Furthermore, in the classical EKF the underlying geometry used is flat (the local coordinates are treated as Euclidean coordinates) and the covariance update \eqref{eq:reset_covariance} is the identity.

\section{EXAMPLE: SECOND-ORDER KINEMATICS}
\label{sec:sim}
In this section, we consider the example of linear second-order kinematics in $\R^3$ with separate bearing and range measurements.
Range and bearing measurements are common measurement modalities in modern robotics.
These measurements could be used to reconstruct the measurement of relative position directly, however, such reconstruction distorts the uncertainties in the model, and is always better to use the raw measurements in filter design if possible.

\subsection{System Definition}
Second-order kinematics on $\R^3$ is given by
\begin{align}
    \dot{p}&=v, \notag \\
    \dot{v}&=a,
    \label{eq:example_continuous_system}
\end{align}
where the system state is $\xi=(p,v)\in\R^3\times\R^3$, and $a\in \R^3$ is the external input.
The measurement functions are given by
\begin{align}
    &y_1:=h_1(p,v)=\frac{p}{\lvert p\rvert}\in\grpS^2\\
    &y_2:=h_2(p,v)=\lvert p\rvert\in\R_+,
\end{align}
where $y_1$ and $y_2$ are bearing and range measurements respectively.
In this system, the state space is $\calM:=\R^3\times\R^3$ and output spaces are $\calN_1:=\grpS^2\subset\R^3$, $\calN_2:=\R_+$.

In this example, we extend the input space to $\vecV:=\R^3\times\R^3$.
An element in the input space can be thought of as two independent inputs $u=(\omega,a)\in\vecV$, and in the implementation we will always use $u=(0,a)$.
Note that modelling $\omega\in\R^3$ as an extra input, even though it will be set to zero for all time, is critical for the equivariance of second order kinematic systems \cite{ng2020equivariant}.
Let $t\in\R$ be a fixed time interval.
We derive the discrete system evolution function by integrating \eqref{eq:example_continuous_system} with a piecewise constant acceleration $a = a_{k+1}$,
\begin{align}
    p_{k+1}&=p_k+t(v_k+\omega_{k+1})+\frac{1}{2}t^2a_{k+1}\\
    v_{k+1}&=v_k+ta_{k+1}.
\end{align}

Define the product Lie group $\grpG:=\SO(3)\times\MR(1)\ltimes\R^3$, where $\MR$ is the group of positive reals under multiplication.
The identity element is $(\id_3,1,0)$, and the inverse element is
\begin{align}
    (R,r,\beta)^{-1}=(R^\top,\frac{1}{r},-\frac{1}{r}R^\top\beta).
\end{align}
The corresponding group multiplication is given by
\begin{align}
    (R_1,r_1,\beta_1)(R_2, r_2,\beta_2)=(R_1R_2, r_1r_2, \beta_1+r_1R_1\beta_2).
\end{align}
Define $\phi:\grpG\times\calM\rightarrow\calM$ by
\begin{align}
    \phi((R,r,\beta),(p,v_p)):=(\frac{1}{r}R^\top p, \frac{1}{r}R^\top(v-\beta)).
\end{align}
Clearly $\phi$ is a smooth, transitive right group action of $\grpG$ on $\calM$.

\subsection{Equivariant System}
Given any $(\omega,a)\in\vecV$ and $(R,r,\beta)\in\grpG$, define $\psi:\grpG\times\vecV\rightarrow\vecV$ by
\begin{align}
    \psi((R,r,\beta),(\omega,a)):=(\frac{1}{r}R^\top(\omega+\beta), \frac{1}{r}R^\top a).
\end{align}

Define the function $\Lambda:\calM\times\vecV\rightarrow\grpG$ as
\begin{align}
    \Lambda((p,v),(\omega,a))=(R_\Lambda, r_\Lambda, \beta_\Lambda).
\end{align}
Let $p'=p+t(v+\omega)+\frac{1}{2}t^2a$, then
\begin{align}
    R_\Lambda&=\id_3+(\frac{p'}{\lvert p' \rvert}\times\frac{p}{\lvert p\rvert})^\times+\frac{1-\frac{p'}{\lvert p' \rvert}^\top\frac{p}{\lvert p \rvert}}{\lvert \frac{p'}{\lvert p'\rvert}\times\frac{p}{\lvert p \rvert}\rvert^2}(\frac{p'}{\lvert p' \rvert}\times\frac{p}{\lvert p \rvert})^{\times^2},\\
    r_\Lambda&=\frac{\lvert p \rvert}{\lvert p'\rvert},\\
    \beta_\Lambda&=v-r_\Lambda R_\Lambda(v+t a),
\end{align}
where $R_\Lambda$ is the solution of $R_\Lambda\frac{p'}{\lvert p' \rvert}=\frac{p}{\lvert p \rvert}$.

\subsection{Implementation}
We simulate an oscillatory trajectory for second-order kinematics to verify the performance of the proposed filter.
The state is initialized with $(p,v)=((0,0,50),(0,0,0))\in\R^3\times\R^3$, with accelaration $a=(0,\cos(\tau),0)$.
The trajectory is realized using Euler integration at time step $t=10^{-4}s$, sampled at 100 Hz.
The estimator has an acceleration sensor that reads $a_k\in\R^3$ but corrupted with piecewise constant zero-mean white Gaussian noise with variance $0.05 m/s^2$ per axis.
Additionally, the sensor provides bearing and range measurements and is also corrupted with Gaussian noises $\mu_b\sim\grpN(0,1^2)$ and $\mu_r\sim\grpN(0,1^2)$ in degrees and metres respectively.
We implement the extended Kalman filter with linearised output, the continuous-time EqF \cite{vanGoor_Hamel_Mahony_2021} and the discrete-time EqF.
For the discrete-time EqF, we follow the design procedure presented in Section \ref{subsec:eqf}.
The local coordinate chart for the state is the normal coordinates on $\calM$ derived from projecting exponentials from the Lie group to the manifold through the group action.
The gain matrices are chosen based on the true noise parameters.
All filters are initialized at $(p,v)=((0,0,50),(0,0,0))$ with Gaussian noises $\mu_p\sim\grpN(0,7.5^2)$ and $\mu_v\sim\grpN(0,2^2)$ in meters and meters per second respectively.
The filters are running at 100 Hz.

To compare their performance, we plot the error in position and velocity, as well as the filter energy.
The filter energy is $\frac{1}{m}\varepsilon^\top \Sigma^{-1} \varepsilon$ where $m$ is the dimension of the state.
The expected value of filter energy should be 1, while smaller or larger energy indicates the filter is either under-confident or over-confident about the estimation.
The first comparison is between the discrete-time EqF with and without reset covariance after the update, to demonstrate the effectiveness of the proposed method.
The second comparison is among the conventional EKF, continuous-time EqF and the proposed discrete-time EqF, to show the advantage of the discrete-time filter in this example.
\subsection{Simulation Results}

\begin{figure}
    \centering
    \includegraphics[width=0.7\linewidth]{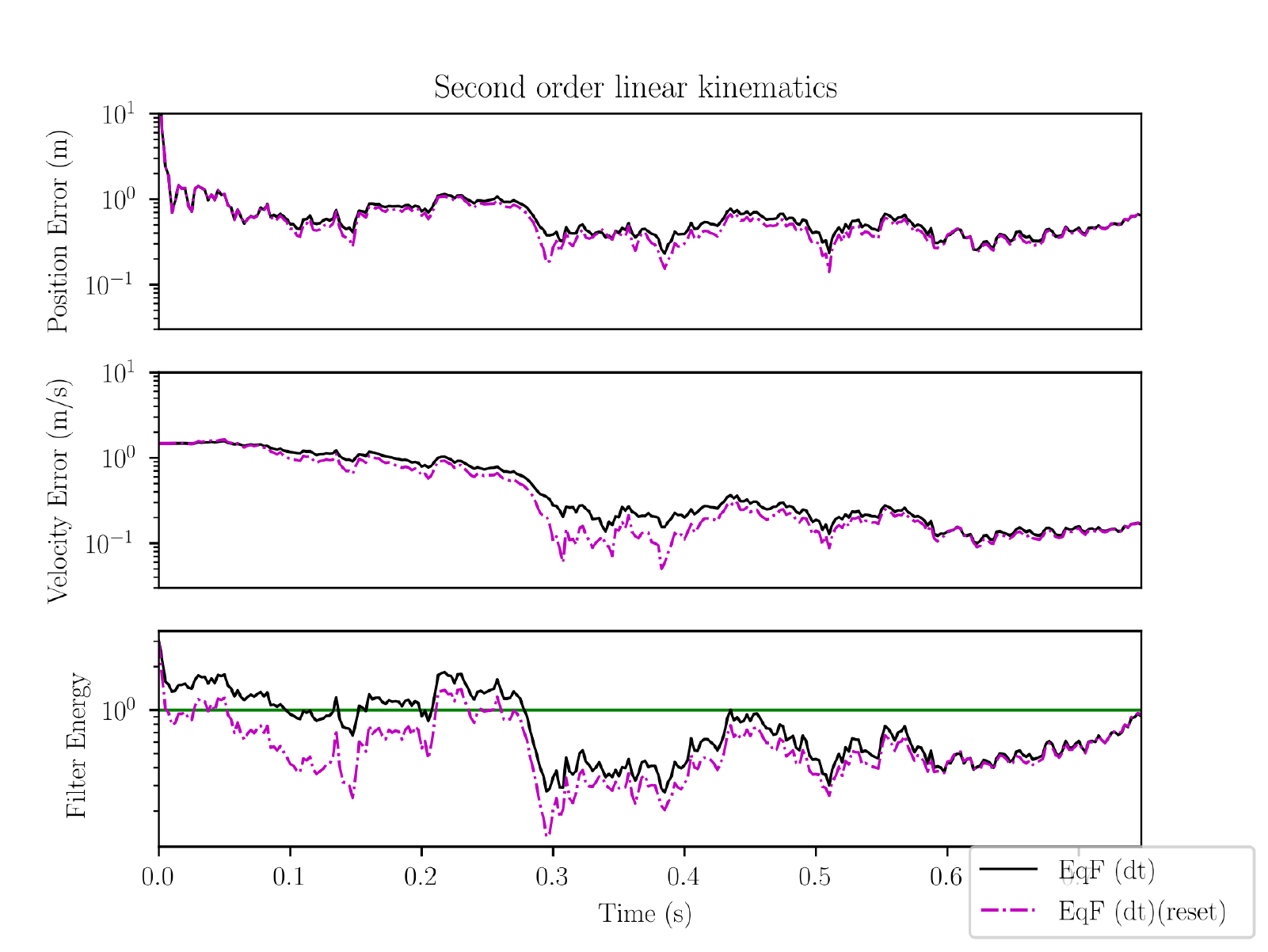}
    \caption{Comparison of discrete-time EqF implementation with/without reset step in the transient stage. The black solid line shows the discrete-time EqF without reset. The purple dashed line shows the discrete-time EqF with reset. The green horizontal line in the third subplot is for filter energy = 1. }
    \label{fig:reset}
\end{figure}

Figure \ref{fig:reset} shows the performance of the discrete-time EqF with (black, solid) and without (purple, dash-dot) the reset step.
The EqF with reset step clearly outperforms the same filter without the reset step.
The improvement occurs during the transient when the correction term in the filter is larger and the effect of parallel transport is more significant.
Later in the plot ($>0.7s$) the filters have converged and the effect of the reset step is less noticeable (Fig.~\ref{fig:ct_dt}).
The position error is less significant than the velocity error although still appreciable.
The likely reason for this lies in the fact that the position is directly observable from range and bearing measurements, while the velocity estimation depends on correlation of error encoded in the covariance estimate.

Figure \ref{fig:ct_dt} shows the performance of EKF (red, solid), discretized continuous-time EqF (blue, dash) and discrete-time EqF (black, solid) over the same trajectory.
The continuous-time and discrete-time EqFs (both with reset modification of the Riccati) have very similar performance during the transient, while EKF (without reset modification) takes much longer to converge.
The discrete-time EqF has better asymptotic performance compared to the discretized continuous-time EqF, especially in the velocity error.
We hypothesise that this performance gain is due to the fact that the error incurred in a discrete state update implemented as a symmetry transform is better adapted to the discrete equivariant filter equations than the error incurred by the discretization of a continuous-time filter, even when the associated filter was based on equivariant design principles in continuous time.

\begin{figure}
    \centering
    \includegraphics[width=0.7\linewidth]{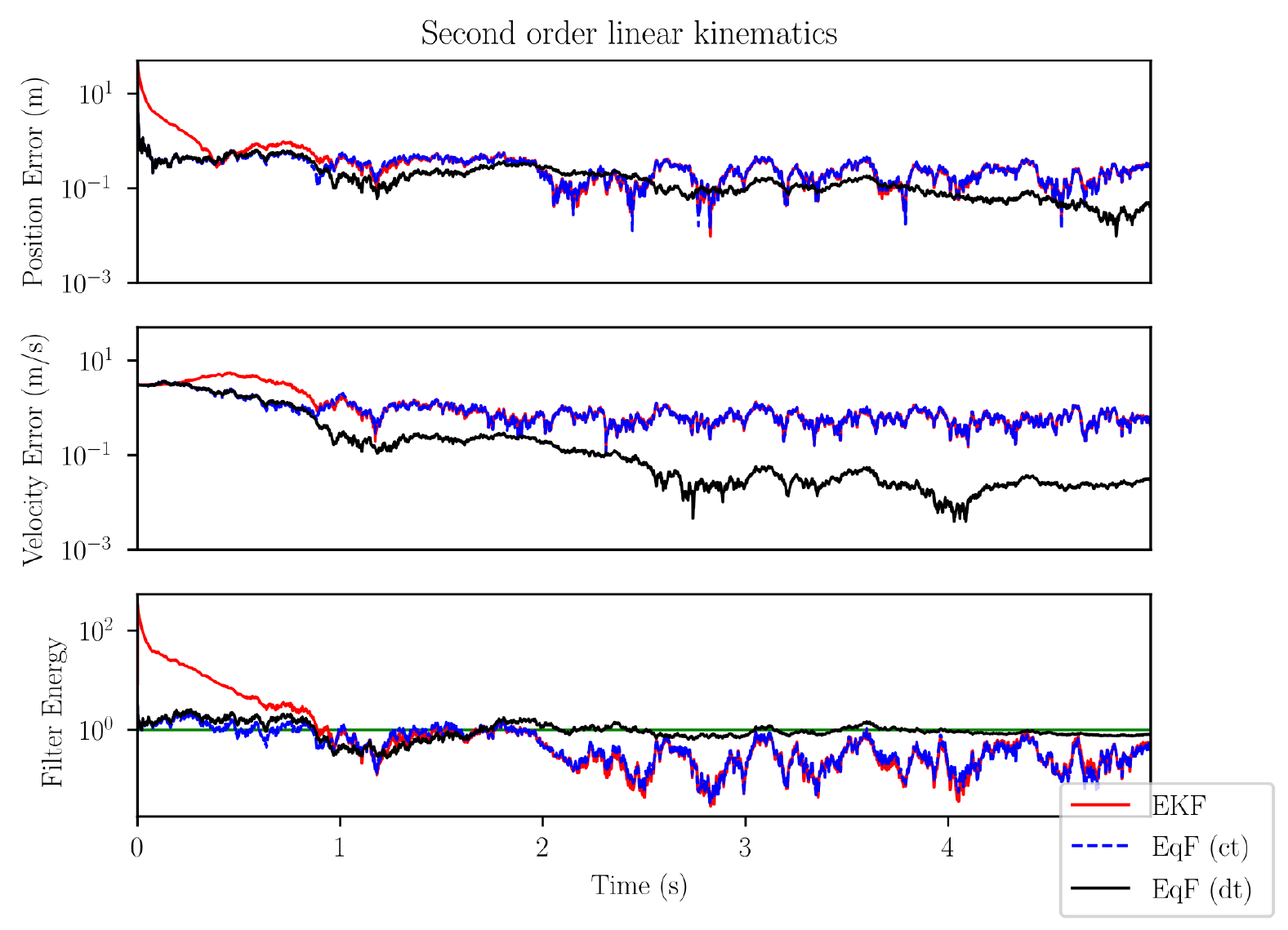}
    \caption{Comparison of three different filters for second order kinematics. The red solid line shows the Extended Kalman filter with reconstructed position measurements. The blue dashed line shows the EqF with polar symmetry implemented in continuous time. The black solid line shows the discrete-time EqF that we propose. The green horizontal line in the third subplot is for filter energy = 1.}
    \label{fig:ct_dt}
\end{figure}

\section{CONCLUSIONS}
\label{sec:conclusion}
This paper presents an equivariant filter design methodology for discrete-time systems on homogeneous spaces.
The algorithm contains an extra reset step which explicitly computes the parallel transport of covariance matrix using the invariant affine connection.
The example of second order kinematics system with range and bearing measurements is used to detail the implementation.
The simulation demonstrates the convergence of discrete-time EqF as well as the improvement from the covariance reset step.
It is shown that the discrete system lift brings better asymptotic performance of the filter.

\bibliography{arxiv_version}
\bibliographystyle{IEEEtran}

\end{document}